\definecolor{blueLink}{rgb}{0,0.2,0.8}
\newcommand{\lref}[2][]{\hyperref[#2]{#1~\ref*{#2}}}
\newtheorem{theorem}{Theorem}
\newtheorem{lemma}[theorem]{Lemma}
\newcommand{\OPT}{\textsc{Opt}\xspace}
\newcommand{\BT}{\textsc{BT}\xspace}
\newcommand{\eps}{\varepsilon}
\title{Logarithmic price of buffer downscaling \\ on line metrics\thanks{%
This work was partially supported by Polish National Science Centre grant 2016/22/E/ST6/00499,
by the Center of Excellence -- ITI, project P202/12/G061 of GA~\v{C}R (J. Sgall), and by the
project 634217 of GAUK (M. B\"ohm, P. Vesel\'y).}}
\author[1]{Marcin Bienkowski}
\author[2]{Martin B\"{o}hm}
\author[1]{{\L}ukasz Je\.{z}}
\author[1,3]{Paweł Laskoś-Grabowski}
\author[1]{Jan Marcinkowski}
\author[2]{Ji\v{r}\'{\i} Sgall}
\author[1]{Aleksandra Spyra}
\author[2]{Pavel Vesel\'{y}}
\affil[1]{Institute of Computer Science, University of Wrocław, Poland}
\affil[2]{Computer Science Institute, Charles University, Czech Republic}
\affil[3]{Electron (Chaddenwych Services Ltd.), London, UK}
\date{}
\begin{document}

\maketitle

\vspace{-1cm}

\begin{abstract}
We consider the reordering buffer problem on a  line consisting of $n$
equidistant points.  We show that, for any constant $\delta$,  an (offline)
algorithm that has a buffer $(1-\delta) \cdot k$ performs worse by a factor of 
$\Omega(\log n)$ than an offline algorithm with buffer~$k$. In particular, this
demonstrates that the $O(\log n)$-competitive online algorithm
\textsc{MovingPartition} by Gamzu and Segev (ACM Trans.\ on Algorithms, 6(1),
2009) is essentially optimal against any offline algorithm with a
slightly larger buffer. 
\end{abstract}

%%%%%%%%%%%%%%%%%%%%%%%%%%%%%%%%%%%%%%%%%%%%%%%%%%%%%%%%%%%%%%%%%%%%%%%
%%%%%%%%%%%%%%%%%%%%%%%%%%%%%%%%%%%%%%%%%%%%%%%%%%%%%%%%%%%%%%%%%%%%%%%
%%%%%%%%%%%%%%%%%%%%%%%%%%%%%%%%%%%%%%%%%%%%%%%%%%%%%%%%%%%%%%%%%%%%%%%

\section{Introduction}

In the reordering buffer problem, requests arrive in an online fashion at the
points of a metric space and have to be served by an algorithm. An algorithm
has a single server kept at a point of the metric space and is equipped
with a finite buffer, whose capacity is denoted by $k$. The buffer
is used to give the algorithm a~possibility of serving requests in 
a~different order. Namely, at any time, there can be at most $k$ unprocessed
requests, and once their number is exactly~$k$, the algorithm has to move its
server to the position of a~pending request of its choice. The request is
then considered served and removed from the buffer. The goal is to minimize 
the total distance traveled by the server. 

The problem was coined by R{\"{a}}cke et al.~\cite{reordering-buffer-first}
and the currently best algorithm for general metric spaces is a randomized
$O(\log n \cdot \log k)$-competitive strategy~\cite{reordering-buffer-hst,reordering-buffer-hst-better},
where $n$ is the number of points in a metric space. Most of the research
focused however on specific metric spaces, such as uniform
metrics~\cite{reordering-buffer-first,reordering-buffer-randomized-rabani,reordering-buffer-map},
stars~\cite{reordering-buffer-adamaszek,reordering-buffer-randomized-stars,reordering-buffer-rabani-soda}
or lines of $n$ equidistant
points~\cite{reordering-buffer-gamzu-segev,reordering-buffer-line-first}.

\subsection{Two-stage approach}
One of the straightforward ways to attack the problem is by a two-stage
approach: (i) compare an~online algorithm to an optimal offline algorithm that 
has a smaller buffer $h < k$ and (ii)~bound the ratio between
optimal offline algorithms equipped with buffers $h$ and $k$. 
Such an approach was successfully executed by Englert and
Westermann~\cite{reordering-buffer-map} for uniform metrics. They constructed an
online algorithm \textsc{Map} (with a buffer $k$) that is $4$-competitive
against $\OPT(k/4)$,
where $\OPT(s)$ denotes the optimal offline algorithm with buffer~$s$.
Subsequently, they showed that on any instance the
costs of $\OPT(k/4)$ and $\OPT(k)$ can differ by a factor of at most $O(\log k)$. 
This implies that \textsc{Map} is $O(4 \cdot \log k) = O(\log k)$-competitive. 

As shown by Aboud~\cite{reordering-buffer-aboud}, for the uniform metric space
the relation between $\OPT(k/4)$ and $\OPT(k)$ cannot be asymptotically
improved, which excludes the option of beating the ratio $O(\log k)$ by the
described two-stage process. Note that this does not rule out the possibility
of decreasing the ratio by another approach. Indeed, subsequent works gave
improved results, essentially resolving the uniform metric case for
deterministic algorithms: Adamaszek et al.~\cite{reordering-buffer-adamaszek}
gave an $O(\!\sqrt{\log k})$-competitive deterministic strategy and showed
that the competitive ratio of every deterministic algorithm is
$\Omega(\!\sqrt{\log k / \log \log k})$.

\subsection{Two-stage approach for line metrics (our result)}

Arguably, the most uncharted territory is the line metric, or more
specifically, a line graph consisting  of $n$ equidistant sites. There, 
the lower bound on the competitive ratio is only~$2.154$~\cite{reordering-buffer-gamzu-segev}.
On the other hand, the best known strategy is an $O(\log n)$-competitive algorithm
\textsc{MovingPartition} by Gamzu and Segev~\cite{reordering-buffer-gamzu-segev}.
Note that achieving an upper bound of $O(k)$ is possible for any metric 
space~\cite{reordering-buffer-map}, and hence the competitive ratio for 
line metrics is $O(\min \{k, \log n\})$. 
In this paper, we show that it is not possible to improve this bound 
by the two-stage approach described above, by proving the following result.

\begin{theorem}
\label{thm:main_theorem}
Fix a line consisting of $n$ equidistant sites, any $k$, and any constant 
$\delta \in (0,1)$. There exists an~input sequence on which the ratio between the costs of 
$\OPT(k)$ and $\OPT((1-\delta) \cdot k)$ is at least 
$\Omega(\min \{ k, \log n\})$.
\end{theorem}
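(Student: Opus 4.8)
The plan is to construct, for $L := \Theta(\min\{\log n,\, \delta k\})$, a single fixed input sequence organized as a self-similar hierarchy of $L$ scales on a sub-line of $\Theta(2^L)$ sites, and to show that on it $\OPT(k) = O(n)$ while $\OPT((1-\delta)k) = \Omega(nL)$; dividing gives the claimed ratio, since $L = \Omega(\min\{k,\log n\})$ for constant $\delta$. The value $L = \Theta(\min\{\log n,\delta k\})$ is exactly what the two resource bottlenecks allow: the line must be long enough to carry $L$ nested dyadic scales (the $\log n$ term), and — since the construction keeps the buffer occupied by $\Theta(k/L)$ requests of each scale simultaneously at the decisive moments — the $\delta k$ missing slots must be spread over $\Omega(L)$ scales to be felt at each, which needs $\delta k = \Omega(L)$ (the $k$ term). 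Because the introduction already gives an $O(\min\{k,\log n\})$ upper bound on this gap for every metric, this is the best ratio one could hope to prove, so we target it exactly.

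In more detail: index the scales $j = 1,\dots,L$; at scale $j$ the line splits into $2^{L-j}$ consecutive blocks of $2^{j}$ sites each, and within a scale-$j$ block we plant a gadget that emits — interleaved in time with the gadgets of all other scales — a stream of requests that oscillates between that block's two scale-$(j{-}1)$ sub-blocks (so each oscillation demands a full $\Theta(2^j)$ traversal of the block), while the finer structure lives inside those sub-blocks recursively. The emission rates are calibrated against one designated offline strategy with buffer $k$: requests are dripped out just fast enough that this strategy, sweeping the line roughly once from left to right while carrying $\Theta(k/L)$ pending requests of each scale, can always drain the currently relevant block of every scale before being pushed onward. It thus pays $\Theta(n)$ per scale but handles all scales during a single pass, for total $O(n)$; this is the upper bound on $\OPT(k)$.

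For the lower bound on $\OPT((1-\delta)k)$ we use the recursive structure: a strategy that, for lack of buffer, must leave a scale-$j$ block before draining it pays $\Omega(2^j)$ to return later, and — this is the purpose of the self-similar design and the interleaved schedule — such a return re-activates the scale-$(j{-}1)$ sub-gadgets it had abandoned, so the penalty \emph{cascades} down the finer scales instead of being discharged by one extra sweep. A counting argument over the total capacity $(1-\delta)k$ then shows that $\Omega(L)$ scales must be "underfunded" in this sense, each forcing $\Omega(2^{L-j}\cdot 2^j) = \Omega(n)$ worth of returns, for $\Omega(nL)$ overall. To make this hold against an \emph{arbitrary} offline algorithm rather than just the sweeping one, I would isolate, for each scale-$j$ block, the time intervals in which it holds $\Omega(k/L)$ pending requests, argue the server is then essentially confined to that block, and lower-bound total travel by counting the forced block entries and exits across all scales.

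The decisive obstacle is exactly this last step: upgrading the intuition "with a smaller buffer you fall behind at every scale" into a theorem valid for \emph{every} offline strategy with buffer $(1-\delta)k$, including adversarial ones that deliberately abandon whole scales, reorder service across scales, or spend slack in one block to rescue another. Everything hinges on designing the gadgets and the release schedule so that the $\delta k$ deficit is genuinely non-transferable — it must bite at $\Omega(L)$ scales however the buffer budget is allocated, and the resulting penalties must be additive (one unpaid re-traversal per underfunded scale) rather than collapsible into $O(1)$ extra passes, which would degrade the bound to a constant. Making the per-scale request counts, the temporal interleaving, and the capacity accounting mutually consistent inside one fixed instance is the real work; the upper-bound strategy and the geometric sum $\sum_j 2^{L-j}\,2^{j} = nL$ are routine by comparison.
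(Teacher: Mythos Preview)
Your high-level architecture matches the paper's: a self-similar dyadic hierarchy of $L=\Theta(\min\{\log n,\delta k\})$ scales, an $O(n)$ sweep for the large-buffer algorithm, and a recursive penalty argument for the small-buffer one, followed by a packet/grouping trick to decouple $k$ from $\log n$. But you yourself identify the decisive step as missing, and the paper's proof shows that the missing ingredients are not mere bookkeeping.

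First, the paper does not rely on ``calibrated emission rates'' or on the server being ``essentially confined'' to a block when it is busy. Instead it plants, at every site $j$ and in the step after time $j$, an \emph{anchor}: a batch of $\ell+1$ requests, more than the buffer can hold. This forces \emph{every} algorithm with buffer at most $\ell$ to touch site $j$ during that step, so its trajectory is pinned to the basic diagonal at all integer times. This one device kills almost all adversarial freedom and is exactly what makes the lower bound tractable against arbitrary offline strategies; your proposal has no analogous mechanism, and without it the ``non-transferability of the $\delta k$ deficit'' you worry about is a real obstacle, not a detail.

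Second, once the trajectory is anchored, the paper's lower bound is \emph{not} a counting of underfunded scales. It is a clean three-parameter recurrence $T(p,q,r)$ (free buffer slots, block rank, old requests already buffered for this block) with just two branches per block: either the server reaches the top of the block during its first half (paying an extra $2^q$ but clearing all $r$ old requests and the one new rank-$(q-1)$ request), or it does not (spending one buffer slot to carry that request into the second sub-block). The whole strength of the bound comes from solving this recurrence explicitly via an ansatz $\tau(p,q,r)=\frac{2^q}{a}(q-(1+\eta)p-b_r)$ and checking both branches by induction; the ``cascading re-activation'' picture you sketch does not appear and would be hard to make rigorous, since in the offline model served requests do not come back.

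In short: the construction you outline is morally the right one, but the two load-bearing ideas --- anchors to pin every trajectory, and the $T(p,q,r)$ recurrence with its closed-form minorant --- are absent, and they are precisely what converts the intuition into a proof.
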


Our result has additional consequences for the two known online 
algorithms for the reordering buffer problem: 
\textsc{MovingPartition}~\cite{reordering-buffer-gamzu-segev} and 
\textsc{Pay}~\cite{reordering-buffer-hst-better}.

A straightforward modification of the analysis
in~\cite{reordering-buffer-gamzu-segev} shows that
\textsc{MovingPartition} is in fact $O((h/k) \cdot \log n)$-competitive against
an~optimal offline algorithm that has a larger buffer $h > k$. 
Our result implies that, assuming $\log n = O(k)$, 
for $h = k \cdot (1+\eps)$ and a~fixed $\eps > 0$,
this ratio is asymptotically optimal.

Similarly, algorithm \textsc{Pay}~\cite{reordering-buffer-hst-better} achieves a
competitive ratio of $O((h/k) \cdot (\log D + \log k))$ on trees with
hop-diameter $D$, and hence it is $O((h/k) \cdot (\log n + \log k))$-competitive
when applied to a line metric.
Therefore, \textsc{Pay} is asymptotically optimal for line metrics when 
when $n = O(k)$ and $h = k \cdot (1+\eps)$ for a~fixed $\eps > 0$.

%%%%%%%%%%%%%%%%%%%%%%%%%%%%%%%%%%%%%%%%%%%%%%%%%%%%%%%%%%%%%%%%%%%%%%%
%%%%%%%%%%%%%%%%%%%%%%%%%%%%%%%%%%%%%%%%%%%%%%%%%%%%%%%%%%%%%%%%%%%%%%%
%%%%%%%%%%%%%%%%%%%%%%%%%%%%%%%%%%%%%%%%%%%%%%%%%%%%%%%%%%%%%%%%%%%%%%%

\section{Main construction}

The main technical contribution of this paper is to show
\lref[Theorem]{thm:main_theorem} for the case when the line consists of exactly 
$n = 2^k + 1$ sites. For such a setting, the key lemma given below yields
the cost separation of $\Omega(k) = \Omega(\log n)$. In the next section, we
show how request grouping extends this result from the case of $n = 2^k + 1$
sites to arbitrary values of $k$ and $n$. 

\begin{lemma}
\label{lem:main_lemma}
Fix two integers $0 < \ell' < \ell$ and a constant $\eps$ such that 
$\ell' = (1-\eps) \cdot \ell$. Assume that the line consists
of $2^\ell+1$ equidistant sites. There exists an input sequence on which
the ratio between the costs of $\OPT(\ell)$ and $\OPT(\ell')$ is
at least $\ell \cdot \eps^2 \cdot (1+\eps)^{-1} \cdot \log_2^{-1} (1+1/\eps)$.
\end{lemma}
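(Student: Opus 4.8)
The goal is to construct an input sequence on a line of $2^\ell + 1$ sites where a buffer of size $\ell$ gives a dramatic advantage over a buffer of size $\ell' = (1-\eps)\ell$. Let me think about what structure would make buffer size matter so much.

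The line has $2^\ell + 1$ points. The number $2^\ell$ suggests a binary/recursive structure — think of the points as indexed by $\{0, 1, \dots, 2^\ell\}$, which naturally admits a hierarchical decomposition into dyadic intervals. The key intuition for reordering buffer lower bounds: if requests arrive in a pattern that "interleaves" many distant locations, an algorithm with a large buffer can batch them up and serve all requests at one location before moving on, traveling essentially the total span once; an algorithm with a smaller buffer is forced to flush requests prematurely and shuttle back and forth.

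Let me sketch the construction. Consider presenting requests in "rounds." In a round associated with a dyadic interval of length $2^j$, we want to force a buffer-$\ell'$ algorithm to traverse that interval, while a buffer-$\ell$ algorithm can avoid it by holding requests. The recursion should be set up so that $\OPT(\ell)$ can serve the whole thing cheaply — say, with cost $O(2^\ell)$ or so, essentially one sweep — while $\OPT(\ell')$ is forced into $\Omega(\ell \cdot 2^\ell)$ total movement, because at each of the $\Theta(\ell)$ scales it pays a full traversal. The gap $\ell - \ell' = \eps \ell$ is what should be "spent": at each scale we issue roughly $\eps \ell$ extra requests that overflow the small buffer but fit in the large one.

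Here is the plan in more detail. I would index sites $0,\dots,2^\ell$ and think of a complete binary tree of dyadic subintervals. The adversary processes intervals top-down: for the whole interval, it issues a carefully timed batch of requests at the two endpoints (and possibly the midpoint), enough that $\OPT(\ell')$'s buffer fills and it must move the server all the way across, but $\OPT(\ell)$ can defer. Then recurse into both halves. A large-buffer algorithm serves requests in a globally good order (left-to-right-ish sweep, or a DFS of the tree that pays each edge $O(1)$ times), achieving cost $\Theta(2^\ell)$; the small-buffer algorithm is forced, at each of the $\ell$ levels and within the intervals there, to make traversals summing to $\Theta(2^\ell)$ per level, hence $\Theta(\ell \cdot 2^\ell)$ total. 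Taking the ratio gives $\Theta(\ell)$, and tracking the constants carefully in terms of $\eps$ — how many requests per batch, how the $\log_2(1+1/\eps)$ enters (presumably because batches of size related to $1/\eps$ need $\log_2(1/\eps)$-ish depth bookkeeping, or because the branching factor is tuned to $\eps$) — yields the stated bound $\ell \cdot \eps^2 \cdot (1+\eps)^{-1} \cdot \log_2^{-1}(1+1/\eps)$.

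**Main obstacle.** The delicate part is proving the \emph{lower bound} on $\OPT(\ell')$: one must show that \emph{every} offline strategy with the small buffer — not just the natural greedy one — is forced to pay at each scale. This requires an adversary argument or a potential/charging scheme that is robust to the adversary's clairvoyance: since $\OPT(\ell')$ sees the whole sequence, the timing of batches must be such that no clever reordering lets it "skip" a level. I expect the proof handles this by making the batches at each node large enough (size just above $\ell'$, i.e. exploiting that $\ell - \ell' = \eps\ell$) that, by a pigeonhole/counting argument on buffer occupancy, the server provably crosses each dyadic interval's midpoint a number of times proportional to the number of nested batches containing it, regardless of order. Getting the quantitative constants right — and in particular squeezing out the $\eps^2$ and the $\log_2(1+1/\eps)$ factors from the interplay between batch size, recursion depth, and the $2^\ell$ site budget — is where the real work lies; the upper bound on $\OPT(\ell)$ should be comparatively routine (exhibit one explicit schedule and bound its cost).
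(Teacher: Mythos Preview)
Your high-level intuition is sound: the construction is dyadic, $\OPT(\ell)$ essentially sweeps the line once at cost $2^\ell$, and $\OPT(\ell')$ is forced to pay extra at many scales. But the mechanism you propose---batches sized ``just above $\ell'$'' at each node to overflow the small buffer---does not work and is not what the paper does. If a batch has size exceeding $\ell'$, then $\OPT(\ell)$ can hold at most $\ell/\ell' = 1/(1-\eps) = O(1)$ such batches at once, so the large buffer gains only constant flexibility, not the $\Theta(\ell)$ you need. Conversely, scattering batches of size $\eps\ell$ does not by itself force $\OPT(\ell')$ to move anywhere in particular; an offline algorithm can simply serve them in a convenient order.

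The paper's construction separates two roles you have conflated. First, \emph{anchors}: at each time step $j$ it places $\ell+1$ requests at site $j$, which forces \emph{every} algorithm with buffer at most $\ell$ (including $\OPT(\ell)$) to touch site $j$ during step $j$. This pins every feasible trajectory to the diagonal and is what makes the offline lower bound tractable. Second, \emph{regular requests}: exactly one per dyadic block, placed at the block's far corner. Within a rank-$q$ block the algorithm now faces a clean dichotomy: either detour to the top edge early (extra cost $2^q$, but this also clears all old requests accumulated at that site) or store the corner request (consuming one buffer slot, and it becomes an ``old'' request for the second sub-block). $\OPT(\ell)$ can always take the second option; $\OPT(\ell')$ runs out of slots.

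The lower bound is then a three-parameter recursion $T(p,q,r)$ (free slots $p$, block rank $q$, and $r$ old requests already stored for this block's top site):
\[
T(p,q,r) \;\geq\; \min\bigl\{\, 2^q + 2\,T(p+r,\,q-1,\,0),\;\; T(p-1,\,q-1,\,0)+T(p-1,\,q-1,\,r+1) \,\bigr\}.
\]
The constant $\eps^2 \cdot (1+\eps)^{-1} \cdot \log_2^{-1}(1+1/\eps)$ does not come from batch sizes or branching factors tuned to $\eps$; it emerges from solving this recursion via an explicit lower-bounding function $\tau(p,q,r) = (2^q/a)\bigl(q - (1+\eta)p - b_r\bigr)$ with $a=(1+\eta)\log_2(1+1/\eta)$ and $b_r = 2(2^r-1)\eta$, evaluated at $\eta=\eps$ and $p=(1-\eps)\ell$. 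The parameter $r$---old requests that get cleared for free when a detour is taken, making detours increasingly attractive---is what makes the recursion nontrivial, and it is entirely absent from your sketch.
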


We number the sites from $0$ to $2^\ell$. For simplicity of the description,
we associate times with request arrivals: all requests arrive only during a
\emph{step} between two consecutive integer times. For a~single step, we will
explicitly specify the positions of the requests arriving in this step but not
their time ordering, which is irrelevant. (If needed, one may assume that they
arrive in the increasing site numbering order.)
The input sequence consists of \emph{phases}, one of which we describe
below. Then, any subsequent phase is ``mirrored'' with respect to the previous
one, i.e., the description is identical except for reversing the site
numbering.

The phase consists of $2^\ell$ steps and it begins at time 0. Each request
is either \emph{regular} or \emph{auxiliary}. Each regular request has a
\emph{rank} $i$ from $0$ to $\ell-1$ and their positions are defined 
by the recursive construction of \emph{blocks} described below.

\begin{figure}[t]
\centering
\includegraphics[width=0.66\textwidth]{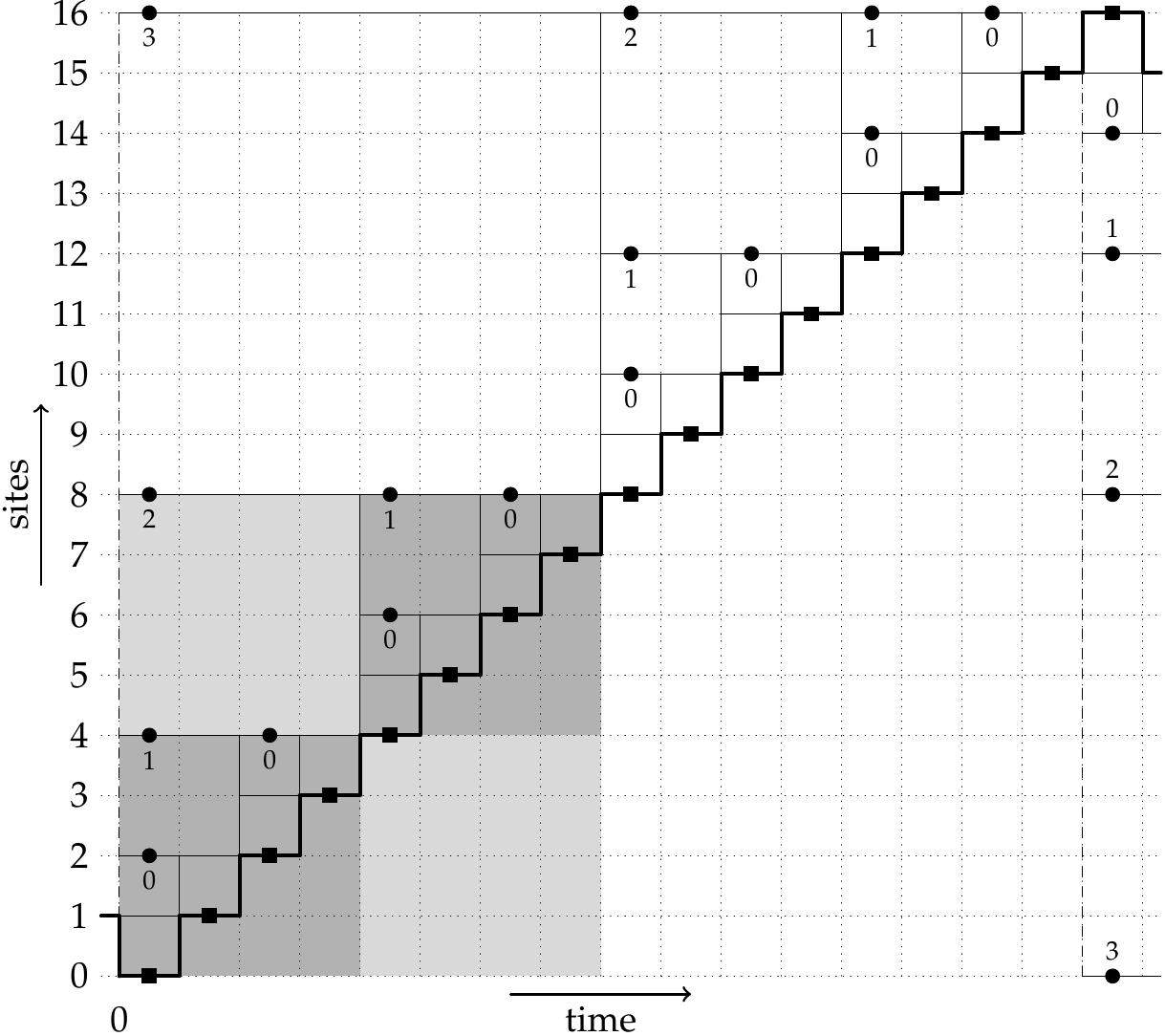}
\caption[One phase of the input sequence]{
A phase of the input sequence for $\ell=4$ (or, a line of 17 sites).
Horizontal dotted lines correspond to sites, whereas vertical dotted lines
correspond to integer times.  The dashed lines indicate the beginning and
the end of the phase.
Disks and squares denote respectively the regular requests (annotated
with their ranks) and the anchors (sets of $\ell + 1$ auxiliary requests each).
A block of rank $3$ and its two sub-blocks of rank $2$ are marked at the bottom left;
the block contains one regular request of rank $2$.
The thick line represents the basic trajectory.
The thinly-bordered squares have 
sides of length equal to powers of 2 and are drawn purely for clarity of the
input construction.}
\label{fig:input}
\end{figure}

For any integers $q \in \{ 1, \ldots, \ell\}$ and 
$s \in \{0, \ldots, 2^{\ell-q}-1\}$, a \emph{$(q,s)$-block} is
a square portion of space-time, consisting of all times in $(2^q \cdot s,
2^q \cdot (s+1)]$ and all sites in $(2^q \cdot s,\,2^q \cdot (s+1)]$,
where $q$ is called the
\emph{rank} of the block. 
For instance, the $(\ell,0)$-block is the whole phase. Moreover, 
$(q,s)$-block contains two \emph{sub-blocks}: the $(q-1,\,2s)$-block and 
the $(q-1,\,2s+1)$-block, see~\lref[Figure]{fig:input}.
In each $(q,s)$-block a unique request of rank $q-1$ arrives at site 
$2^q \cdot (s+1)$ in the step following time $2^q \cdot s$;
note that this request is not contained in any of its two sub-blocks. 
Graphically, it is near the top left corner of the block, 
and it is depicted as a disk in~\lref[Figure]{fig:input}.

In turn, auxiliary requests are grouped in sets of $\ell + 1$ called
\emph{anchors} (squares in \lref[Figure]{fig:input}). For all $j$ from $0$ to $2^\ell-1$, the
$j$-th anchor arrives at site $j$ in the~step following time $j$.

Throughout the paper, any relative directions in the text pertain to this figure, 
e.g., ``up'' and ``right'' mean ``with growing site numbers'' and ``forward in time'', 
respectively.

%%%%%%%%%%%%%%%%%%%%%%%%%%%%%%%%%%%%%%%%%%%%%%%%%%%%%%%%%%%%%%%%%%%%%%%
%%%%%%%%%%%%%%%%%%%%%%%%%%%%%%%%%%%%%%%%%%%%%%%%%%%%%%%%%%%%%%%%%%%%%%%

\subsection{Upper bound on the cost of the optimal algorithm with a larger buffer}

Let the \emph{basic trajectory} (depicted as a thick line
in~\lref[Figure]{fig:input}) be the movement of a server that, for all~$j$ from
0 to $2^\ell-1$, remains at site $j$ throughout the step between the integer
times $j$ and $j+1$, and moves directly from $j$ to $j+1$ at the integer time
$j+1$.  Note that in the subsequent phase the last site becomes site $0$, and
thus the basic trajectory remains continuous over multiple phases as well.

\begin{lemma}
The cost of $\OPT(\ell)$ in a single phase is at most $2^\ell$.
\label{lem:OPTcost}\end{lemma}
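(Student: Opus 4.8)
The plan is to exhibit an explicit feasible schedule for an algorithm with buffer exactly $\ell$ whose server follows (a time-shifted version of) the basic trajectory, and to show it serves every request in the phase while paying total distance at most $2^\ell$. The cost bound is immediate from the geometry: the basic trajectory moves the server from site $0$ to site $2^\ell$ in unit steps, one per integer time, for a total travelled distance of exactly $2^\ell$; so the only real content is checking \emph{feasibility}, i.e., that at no integer time does the number of unprocessed requests exceed $\ell$, and that every request is eventually served (in fact served before the phase ends, so that phases compose).

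The key idea is that the server sitting at site $j$ during the step after time $j$ is perfectly placed to absorb the $j$-th anchor (which arrives at site $j$ in that very step) and also any regular request located at site $j$. So I would process requests as follows: during the step between times $j$ and $j+1$, with the server parked at site $j$, serve the entire $j$-th anchor (all $\ell+1$ auxiliary requests at site $j$) and every regular request sitting at site $j$ that has already arrived; these cost nothing extra since the server is already there. The requests that can still be ``in flight'' in the buffer at integer time $j+1$ are exactly the regular requests that have arrived by then but whose site is strictly greater than $j$ (the server hasn't reached them yet).

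The main obstacle — and the crux of the argument — is bounding this set of pending regular requests by $\ell$. Here I would use the block structure. A regular request of rank $q-1$ in the $(q,s)$-block arrives at site $2^q(s+1)$ in the step after time $2^q s$, i.e. it arrives exactly when the server enters that block from the left and is served exactly when the server reaches the block's right edge at time $2^q(s+1)$; so it is pending precisely while the server traverses the open interval of that block. At any integer time $j+1$, the blocks whose interior the server currently occupies are nested — one of each rank $q = 1, \dots, \ell$ — and there is at most one pending regular request of each rank (the one belonging to the unique rank-$q$ block currently containing the server, provided that request has already arrived). Counting ranks $0$ through $\ell-1$, that is at most $\ell$ pending regular requests; combined with the fact that all anchors and all lower-sited regular requests have already been served, the buffer occupancy never exceeds $\ell$. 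Hence the schedule is feasible for $\OPT(\ell)$, all requests are served by time $2^\ell$, and the total cost is $2^\ell$, as claimed. (For the very last step, when the server is about to leave site $2^\ell$, one checks that nothing is left pending and that this site becomes site $0$ of the mirrored next phase, so the trajectory and the argument carry over seamlessly.)
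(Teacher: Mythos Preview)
Your proof is correct and follows essentially the same approach as the paper: exhibit the basic trajectory as a feasible schedule for buffer size $\ell$, observe that every anchor is served immediately at the server's current position, and bound the pending regular requests by one per rank. The paper phrases the last point by writing down the arrival and service times of the $j$-th rank-$i$ request directly, whereas you argue via the nested block containing the server at each moment, but the two formulations are the same observation.
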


\begin{proof}
We prove that an algorithm with buffer size $\ell$ that follows the basic 
trajectory, denoted $\BT(\ell)$, yields a feasible solution. As its cost is
clearly $2^\ell$, the lemma follows.  First, all anchors are served 
immediately by $\BT(\ell)$, without storing them in the buffer, as
they arrive at a current position of the server. Second, we will show that for each of $\ell$
possible ranks, $\BT(\ell)$ has at most one regular request of the given rank
in its buffer, and hence its buffer capacity is not exceeded.
To this end, observe that the $j$-th request of rank $i$ arrives at site 
$2^{i+1} \cdot j$ in a step following time $2^{i+1} \cdot (j-1)$. It is served 
at time $2^{i+1} \cdot j$ by $\BT(\ell)$, i.e., right before the next request 
of this rank arrives.

In fact, the cost of $\BT(\ell)$ is optimal as the anchors placed at all
sites force any algorithm with buffer $\ell$ to traverse the whole line at
least once per phase.
\end{proof}

%%%%%%%%%%%%%%%%%%%%%%%%%%%%%%%%%%%%%%%%%%%%%%%%%%%%%%%%%%%%%%%%%%%%%%%
%%%%%%%%%%%%%%%%%%%%%%%%%%%%%%%%%%%%%%%%%%%%%%%%%%%%%%%%%%%%%%%%%%%%%%%

\subsection{Lower bound on the cost of the optimal algorithm with a smaller buffer}

In this part, we construct a lower bound that holds for any algorithm that has
buffer $\ell$ or smaller. Later, we apply this bound to $\OPT(\ell')$; recall
that $\ell' = (1-\eps) \cdot \ell$. Observe that any such algorithm has to
process each anchor as soon as it arrives, since the anchor has $\ell+1$
requests. Therefore, its trajectory within the step between times $j$ and $j+1$
has to pass through site $j$. Informally speaking, its trajectory has to touch
or intersect the basic trajectory in all steps of the phase.

Regular requests inside a $(q,s)$-block are called \emph{new} for this block, whereas
regular requests that arrived at any site in $(2^q \cdot s,\,2^q \cdot (s+1)]$
before time $2^q \cdot s$ are called \emph{old} for this block.
Note that all the old requests for the $(q,s)$-block are at a single site, namely at 
$2^q \cdot (s+1)$.
Recall that $(q,s)$-block contains only one request not contained in its sub-blocks
(with rank $q-1$ and position at site $2^q \cdot (s+1)$).

For $p,r \geq 0, q \geq 1$, let $T(p,q,r)$ be the minimal cost of any
algorithm for serving a block of rank~$q$, assuming that at the beginning of
the block, its buffer has space for $p$ (new) requests and $r$ old requests 
\emph{for this block} are already stored in the buffer.
If the algorithm trajectory leaves the block, $T(p,q,r)$ accounts only for these parts of its
trajectory that are contained completely inside the block. We show the
following lower bound on $T(p,q,r)$.

\begin{lemma}
\label{lem:T_relation}
$T(p,q,r) \geq 2^q-1$. Furthermore, for $q \geq 2$,
\begin{align*}
  T(p,q,r) & \geq \min\,\left\{\,
    2^q+2\cdot T(p+r,q-1,0)\; , \;
    T(p-1,q-1,0)+T(p-1,q-1,r+1)\,
  \right\} 
  & \quad \text{if $p \geq 1$ \enspace,} \\
  T(p,q,r) & \geq 
    \phantom{\min \;\;\;} 2^q+2\cdot T(p+r,q-1,0)
  & \quad \text{if $p = 0$ \enspace.}
\end{align*}
\end{lemma}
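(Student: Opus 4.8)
The plan is to induct on the rank $q$. The base case $T(p,q,r)\ge 2^q-1$ follows from the anchor observation already made: any algorithm with buffer $\le\ell$ must, in each of the $2^q$ steps inside a rank-$q$ block, have its server at the site matching the basic trajectory, so its trajectory restricted to the block must visit $2^q$ distinct consecutive sites and hence travels at least $2^q-1$ (the $-1$ because at the first step the server may already sit at the left edge). For the inductive step I would split a $(q,s)$-block into its two sub-blocks, the left $(q-1,2s)$-block and the right $(q-1,2s+1)$-block, and track the single "new" regular request of rank $q-1$ that sits at the top-left corner of the block at site $2^q(s+1)$ — note this site is the \emph{top} site of the right sub-block, so for the right sub-block this request is "old" (it is stored in the buffer, still pending, sitting at the sub-block's top-right corner) unless the algorithm served it while still in the left sub-block.

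The case analysis for $p\ge 1$ mirrors this dichotomy. \textbf{Case A: the algorithm serves this rank-$(q-1)$ request before leaving the left sub-block} (equivalently, the server reaches site $2^q(s+1)$, which is the top of the whole block, during the time interval of the left sub-block). Then the server must travel from somewhere near the bottom of the block up to the top and back down to re-enter the right sub-block region, contributing an extra $2^q$ on top of the costs of the two sub-blocks; and in both sub-blocks it starts with the full $p+r$ units of free space (the $r$ old requests and the one new request are all gone, or rather the new one is served and the old ones — wait, the $r$ old requests for the parent block are \emph{also} old for the right sub-block). Actually the cleaner bookkeeping: after serving the corner request the buffer has $p+r$ free slots available going into each sub-block (both the $r$ old ones and this extra one eventually leave; charging them as all-present gives the pessimistic $T(p+r,q-1,0)$ for each). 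This yields $2^q + 2T(p+r,q-1,0)$. \textbf{Case B: the algorithm does not serve it in the left sub-block}; then the left sub-block is entered with one buffer slot occupied by this pending request, so only $p-1$ free slots, giving $T(p-1,q-1,0)$ for the left sub-block; and the right sub-block inherits this request as an \emph{old} request (it lives at the right sub-block's top site), so it is entered with $r+1$ old requests and $p-1$ free slots, giving $T(p-1,q-1,r+1)$. Taking the minimum over the two cases gives the claimed bound. When $p=0$ Case B is impossible — there is no free slot to hold the corner request without exceeding capacity, so it must be served in the left sub-block — hence only the Case-A bound survives, which is exactly the stated $p=0$ inequality.

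The main obstacle, and the step I would be most careful with, is making the "restricted cost" accounting honest: $T(p,q,r)$ only counts the portion of the trajectory \emph{inside} the block, the algorithm is free to dip out of the block (up beyond site $2^q(s+1)$ or down below $2^qs$, or forward/backward in time) and such excursions must not let it "cheat." I would argue that (i) the anchors force the trajectory to pass through the basic-trajectory point in every step, so it genuinely must sweep across the block's site-range, and (ii) since the corner request of the block lies strictly inside the block, serving it requires the server to physically be at site $2^q(s+1)$ at a time within the block, which is an \emph{in-block} event and therefore its cost contribution is fully charged. The other delicate point is verifying that the two sub-block subproblems are "independent" in the sense that their optimal in-block costs add — this is fine because the sub-blocks are disjoint in time, so the trajectory segments inside them are disjoint, and the buffer states at the start of each sub-block are exactly as described (what enters the right sub-block's buffer from the left is precisely captured by the "old request" count, since all old requests for a block sit at one site). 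Everything else is the routine induction and the $\min$ over the two cases.
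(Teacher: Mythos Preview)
Your case split (server reaches the block's top site during the first sub-block's time interval, or not) and the resulting buffer bookkeeping are exactly the paper's argument; even the ``extra $2^q$'' accounting for the up-and-back detour in Case~A and the $p=0$ exclusion of Case~B match verbatim. The only cosmetic difference is that the paper does not frame this lemma as an induction on $q$ --- the base bound $T(p,q,r)\ge 2^q-1$ and the recurrence are each proved directly by the anchor argument and the two-case analysis, and the actual induction is deferred to the next lemma where a closed-form lower bound on $T$ is established.
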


\begin{proof}
The first inequality holds, because a block of rank $q$ contains anchors at all 
its sites, and the distance between the two most distant of those is $2^q-1$.
For the remaining two inequalities (for $q \geq 2$), we consider two possible 
behaviors of the algorithm.
\begin{enumerate}
\item The algorithm moves the server to site $2^q \cdot (s+1)$ before time
$2^q \cdot s + 2^{q-1}$. That is, it moves the server to the upper boundary of
the block (or beyond) within the first half-time of the block. The length of
the part of the trajectory contained in the current block but not in its
sub-blocks is then at least $2 \cdot 2^{q-1}$ (to the upper boundary and
back). We may assume that such movement serves all $r$ old requests and also
the unique new request of rank $q-1$. For the purpose of the lower bound, we
may assume that this happens instantly after the block begins.
Consequently, for both sub-blocks there are no buffered old requests and the
buffer part occupied by them was freed. Therefore, in this case
\[
	T(p,q,r)\geq 2^q+2 \cdot T(p+r,q-1,0)
	\enspace.
\]
\item The strategy does not make such movement before time $2^q \cdot s + 2^{q-1}$.
Note that this is only possible for $p \geq 1$ as the unique request
of rank $q-1$ (the top left corner of the block) will be stored in the
buffer, decreasing empty buffer space to $p-1$. This request becomes old for
the later sub-block, but for the earlier sub-block there are no old requests.
Therefore, in this case we get
\[ 
  T(p,q,r) \geq T(p-1,q-1,0)+T(p-1,q-1,r+1)
	\enspace.
\]
\end{enumerate}
Combining these two cases yields the lemma.
\end{proof}

Note that the cost of $\OPT(\ell)$ in a single phase is at least
$T(\ell,\ell,0)$. In this case, it is always possible to take the second case
of the recurrence relation above, which corresponds to the basic trajectory
behavior. $T(\ell,\ell,0)$ then expands into a sum of $2^{\ell-1}$ terms of
the form $T(1,1,r)$. In our reasoning we only need $T(1,1,r) \geq 1$ but a
more careful argument could double this amount, resulting in an~(otherwise
trivial) lower bound of $2^\ell$ for the cost of $\OPT(\ell)$.

\begin{lemma}
\label{lem:Tboundimp}
	Fix any $\eta \in (0,1)$.
	Let $a = (1+\eta) \cdot \log_2 (1+1/\eta)$,
	$b_i = 2(2^i-1) \cdot \eta$ for all $i \geq 0$, and 
  \[
  \tau(p,q,r) = \frac{2^q}{a} \cdot \left(q-(1+\eta)p-b_r\right)
  \enspace.
  \]
  Then, $T(p,q,r) \geq \tau(p,q,r)$ for any $p \geq 0$, $q \geq 1$ and $r \geq 0$.
\end{lemma}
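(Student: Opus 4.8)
The plan is to prove the bound by induction on $q$, feeding the recurrence of \lref[Lemma]{lem:T_relation} into the explicit ansatz $\tau$. The sequences $a$ and $b_r$ are tuned precisely so that the recursive inequalities close up: one of the two branches of the recurrence holds with \emph{equality} after substituting $\tau$ for the sub-block costs, while the other collapses to a single scalar inequality between $a$ and the $b_r$'s, which is exactly the constraint that forces the value of $a$.

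For the base case $q=1$ we use only $T(p,1,r)\ge 2^1-1=1$. Since $b_r\ge 0$ and $(1+\eta)p\ge 0$, we have $\tau(p,1,r)=\tfrac{2}{a}\bigl(1-(1+\eta)p-b_r\bigr)\le \tfrac{2}{a}$, and $a=(1+\eta)\log_2(1+1/\eta)\ge 2$ for $\eta\in(0,1)$: indeed $\eta\mapsto a(\eta)$ is decreasing, since $a'(\eta)=\tfrac{1}{\ln 2}\bigl(\ln(1+1/\eta)-1/\eta\bigr)<0$, and $a(1)=2$. Hence $\tau(p,1,r)\le 1\le T(p,1,r)$.

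For the inductive step, take $q\ge 2$ and assume the claim for $q-1$. By \lref[Lemma]{lem:T_relation}, and since replacing each sub-block cost $T(\cdot,q-1,\cdot)$ by its inductive lower bound $\tau(\cdot,q-1,\cdot)$ only decreases the right-hand sides, it suffices to prove that $2^q+2\,\tau(p+r,q-1,0)\ge\tau(p,q,r)$ for all $p\ge 0$ (which covers the $p=0$ case entirely) and that $\tau(p-1,q-1,0)+\tau(p-1,q-1,r+1)\ge\tau(p,q,r)$ for $p\ge 1$. For the first: using $b_0=0$ one gets $2^q+2\,\tau(p+r,q-1,0)=\tfrac{2^q}{a}\bigl(a+q-1-(1+\eta)(p+r)\bigr)$, so the inequality reduces to
\[
  a-1\ge(1+\eta)\,r-b_r=(1+\eta)\,r-2(2^r-1)\eta=:g(r)\enspace.
\]
For the second: a direct computation gives $\tau(p-1,q-1,0)+\tau(p-1,q-1,r+1)-\tau(p,q,r)=\tfrac{2^q}{a}\bigl(\eta+b_r-\tfrac12 b_{r+1}\bigr)$, which vanishes because $\eta+b_r=\eta\bigl(1+2(2^r-1)\bigr)=(2^{r+1}-1)\eta=\tfrac12 b_{r+1}$. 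Thus, provided $g(r)\le a-1$ holds for every integer $r\ge 0$, the induction goes through.

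It remains to verify $g(r)\le a-1$ for all integers $r\ge 0$. As $g$ is concave in $r$, it is enough to bound its maximum over real $r$. Put $c:=\log_2(1+1/\eta)$, which is $>1$ for $\eta\in(0,1)$; then $\eta(2^c-1)=1$ and $(1+\eta)(2^c-1)=2^c$, so multiplying $g(r)\le a-1=(1+\eta)c-1$ by $2^c-1>0$ and simplifying turns it into $2^{r+1}\ge 2^c(r-c+1)+1$, i.e., with $t:=r+1-c$, into $2^t\ge t+2^{-c}$; this holds since $\min_{t\in\mathbb R}(2^t-t)=\tfrac{1+\ln\ln 2}{\ln 2}>\tfrac12>2^{-c}$. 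The bookkeeping in the inductive step is purely mechanical once $\tau$, $a$, $b_r$ are fixed; the genuine content --- and the only slightly delicate point --- is this scalar inequality $a-1\ge\max_r g(r)$, which is exactly what pins down $a=(1+\eta)\log_2(1+1/\eta)$, and the substitution $t=r+1-c$ is the tidiest route to it, reducing it to the elementary bound $2^t\ge t+2^{-c}$.
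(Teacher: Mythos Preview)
Your proof is correct and follows essentially the same route as the paper: induction on $q$, the same base case via $a\ge 2$, the same two inequalities in the inductive step, the second holding with equality from $\eta+b_r=\tfrac12 b_{r+1}$, and the first reduced to the scalar bound $a-1\ge\max_r\bigl((1+\eta)r-b_r\bigr)$. The only cosmetic difference is in verifying that scalar bound---the paper directly computes the maximizer $r_0$ of $f(r)=(1+\eta)r-b_r+1$ and checks $f(r_0)\le a$ numerically, whereas your substitution $c=\log_2(1+1/\eta)$, $t=r+1-c$ reduces it to the clean inequality $2^t\ge t+2^{-c}$; both are routine calculus.
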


\begin{proof}
The lemma follows by induction on $q$. Its basis corresponds to the 
case $q = 1$, where 
\[
    T(p,1,r) \geq 1 \geq 2/a = \tau(0,1,0) \geq \tau(p,1,r) 
    \enspace.
\]
The second inequality follows as $a > 2$ for any $\eta \in (0,1)$, whereas the last one follows 
from the fact that $\tau$ decreases in both $p$ and $r$. 

To show the inductive step, 
by \lref[Lemma]{lem:T_relation}, it suffices to prove the 
following two inequalities (for $q \geq 2$ and $r \geq 0$):
\begin{align}
	2^q+2\tau(p+r,q-1,0) &\geq \tau(p,q,r) & \text{for $p \geq 0$} \enspace,
	\label{eq:ind-step-1} \\
	\tau(p-1,q-1,0)+\tau(p-1,q-1,r+1) &\geq \tau(p,q,r) & \text{for $p \geq 1$} \enspace.
	\label{eq:ind-step-2}
\end{align}

For \eqref{eq:ind-step-1}, we first determine a bound on 
$f(r) = (1+\eta) \cdot r - b_r + 1$. By taking its derivative in $r$ 
(i.e., $1+\eta - \eta \cdot 2^{r+1} / \log_2 \mathrm{e}$), we observe that $f$ is 
maximized at $r_0 = \log_2 (1+1/\eta) + \log_2 \log_2 \mathrm{e} - 1$,
and therefore 
\[
	f(r) \leq f(r_0) = 
	(1+\eta) \cdot ( \log_2(1+1/\eta) + \log_2 \log_2 \mathrm{e} - \log_2 \mathrm{e} - 1) 
	+ 2 \eta + 1 \leq (1+\eta) \log_2 (1+1/\eta) = a 
	\enspace.
\]
Consequently, applying $b_0 = 0$ as well, we obtain
\[
	2^q+2\tau(p+r,q-1,0) 
	= \frac{2^q}{a}\left(a+q-1-(1+\eta)(p+r)-b_0 \right) 
	\geq \frac{2^q}{a}\left(q-(1+\eta)p-b_r \right) 
	= \tau(p,q,r) 
	\enspace.
\]

For \eqref{eq:ind-step-2}, from the definition of $b_i$
it holds that $(b_0+b_{r+1})/2 = b_{r+1}/ 2 = (2^{r+1}-1) \cdot \eta = b_r + \eta$,
which implies 
\begin{align*}
	\tau(p-1,q-1,0)+\tau(p-1,q-1,r+1)
	&= \frac{2^q}{a}\left(q-1-(1+\eta)(p-1)-\frac{b_0+b_{r+1}}{2} \right) \\
	&= \frac{2^q}{a}\left(q-(1+\eta)p+\eta-\frac{b_0+b_{r+1}}{2} \right) \\
	&= \frac{2^q}{a}\left(q-(1+\eta)p-b_r \right) = \tau(p,q,r) \enspace.	
\end{align*}
\end{proof}

We now show how to prove the key lemma using \lref[Lemma]{lem:OPTcost}
and applying \lref[Lemma]{lem:Tboundimp} to $\OPT(\ell')$.

\begin{proof}[Proof of Lemma~\ref{lem:main_lemma}]
The whole phase is the block of rank $\ell$, with no buffered 
old requests and hence, by the definition of $T$, 
$T(\ell', \ell, 0) = T((1-\eps) \cdot \ell, \ell, 0)$ is a lower bound for 
the cost of $\OPT(\ell')$ in a~single phase.
Using \lref[Lemma]{lem:Tboundimp} with $\eta = \eps$, we obtain
\begin{equation*} 
  T((1-\eps) \cdot \ell, \ell, 0) \geq 
  \frac{2^\ell \cdot \left(\ell-(1+\eps)\cdot (1-\eps)\cdot \ell \right) }{ (1+\eps)\log_2 (1+1/\eps) }
  = 2^\ell \cdot \ell \cdot \frac{\eps^2}{ (1+\eps)\log_2 (1+1/\eps) }
  \enspace.
\end{equation*}
On the other hand, the cost of $\OPT(\ell)$ is at most $2^\ell$ by 
\lref[Lemma]{lem:OPTcost}, and hence the lemma follows. 
\end{proof}

%%%%%%%%%%%%%%%%%%%%%%%%%%%%%%%%%%%%%%%%%%%%%%%%%%%%%%%%%%%%%%%%%%%%%%%
%%%%%%%%%%%%%%%%%%%%%%%%%%%%%%%%%%%%%%%%%%%%%%%%%%%%%%%%%%%%%%%%%%%%%%%
%%%%%%%%%%%%%%%%%%%%%%%%%%%%%%%%%%%%%%%%%%%%%%%%%%%%%%%%%%%%%%%%%%%%%%%

\section{Cost separation for arbitrary buffer sizes}

First, we show that \lref[Lemma]{lem:main_lemma} may be ``scaled up'' appropriately,
and then we extend it to show \lref[Theorem]{thm:main_theorem}.

\begin{lemma}
\label{lem:upscaled}
Fix positive integers $\ell$ and $\beta$, and a constant $\eps \in (0,1)$.
Assume that the line consists of at least $2^\ell+1$ equidistant sites. There
exists an input sequence on which the ratio between the 
costs of two optimal algorithms, one with a
buffer of at least $\beta \cdot \ell$ and one with a buffer of at most $\beta
\cdot (1-\eps) \cdot \ell$, is at least~$\Omega(\ell)$.
\end{lemma}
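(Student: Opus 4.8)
The plan is to emulate the construction of \lref[Lemma]{lem:main_lemma} on a coarsened line, where each ``macro-site'' in the scaled construction corresponds to a block of $\beta$ consecutive physical sites. Concretely, I would take the input sequence from the main construction with parameter $\ell$ (which lives on a line of $2^\ell+1$ sites), and replace every request located at site $i$ by a bundle of $\beta$ identical requests spread across the physical sites $\beta i, \beta i+1, \dots, \beta i + \beta - 1$ (or, equivalently, $\beta$ copies of the request at a single representative site — the exact placement within the bundle is a minor design choice). Each original anchor of $\ell+1$ auxiliary requests becomes a bundle of $\beta(\ell+1)$ auxiliary requests; each regular request of rank $q-1$ becomes $\beta$ regular requests. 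The key bookkeeping observation is that an algorithm with buffer $\beta \cdot m$ operating on the scaled instance behaves exactly like an algorithm with buffer $m$ on the original instance: having room for $\beta m$ requests means having room for $m$ bundles, and serving a bundle costs (up to an additive $O(\beta)$ per bundle, absorbed into constants) the same as serving the single original request, since distances scale by exactly $\beta$ and a bundle can be swept in one short pass of length $O(\beta)$.

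With this correspondence in hand, the argument is: first, replay the upper bound. The scaled analogue of the basic trajectory — a server that sweeps each bundle and advances — is feasible for buffer $\beta \ell$ by the same rank-counting argument as in \lref[Lemma]{lem:OPTcost} (at most one bundle of each of the $\ell$ ranks is ever buffered, i.e. at most $\beta\ell$ requests), and its cost per phase is $O(\beta \cdot 2^\ell)$. Second, replay the lower bound: any algorithm with buffer at most $\beta(1-\eps)\ell$ must, when it encounters each anchor bundle of size $\beta(\ell+1) > \beta(1-\eps)\ell$, serve it immediately, so it touches the scaled basic trajectory in every step; then the recursion of \lref[Lemma]{lem:T_relation} goes through verbatim for the scaled block-cost function $T_\beta(p,q,r) \approx \beta \cdot T(p/\beta, q, r/\beta)$ — the same two-case dichotomy (sweep to the top of the block early, or buffer the corner bundle) applies, with all distances multiplied by $\beta$. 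Feeding this into \lref[Lemma]{lem:Tboundimp} with $\eta = \eps$ and the buffer parameter $p = (1-\eps)\ell$ (in bundle units) yields a per-phase lower bound of $\Omega(\beta \cdot 2^\ell \cdot \ell)$ for the smaller-buffer optimum, exactly as in the proof of \lref[Lemma]{lem:main_lemma}. Dividing, the $\Omega(\beta \cdot 2^\ell)$ factors cancel and the ratio is $\Omega(\ell)$.

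The step I expect to require the most care is making the bundle/buffer correspondence precise enough that the constant-factor slack ($\pm O(\beta)$ per bundle for in-bundle movement, and the rounding of $\beta(1-\eps)\ell$ to an integer number of bundles) is genuinely absorbed without degrading the $\Omega(\ell)$ bound. The cleanest way is probably to place all $\beta$ copies of a bundle at a \emph{single} physical site, so that in-bundle movement is literally zero and a bundle is served by one stop; then ``buffer $\ge \beta\ell$'' and ``buffer $\le \beta(1-\eps)\ell$'' translate to ``$\ge \ell$ bundles'' and ``$\le (1-\eps)\ell$ bundles'' with no rounding issue at all, provided $\beta$ divides the buffer sizes — and if it does not, one shrinks $\eps$ to $\eps/2$ to swallow the discrepancy, which only changes the hidden constant in $\Omega(\ell)$. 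Everything else — the upper-bound trajectory, the recursion, the potential-function estimate — is a mechanical rescaling of what is already proved, with $2^\ell$ replaced throughout by $\beta \cdot 2^\ell$.
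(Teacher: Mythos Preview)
Your ``cleanest way'' paragraph is exactly the paper's proof: replace each request by a packet of $\beta$ copies at the \emph{same} site, observe that any optimal solution processes a packet atomically, and conclude that a buffer of size $\geq \beta\ell$ (resp.\ $\leq \beta(1-\eps)\ell$) holds $\geq \ell$ (resp.\ $\leq (1-\eps)\ell$) packets, so \lref[Lemma]{lem:main_lemma} applies verbatim. The extra-sites case is dispatched by noting that unused sites are irrelevant.

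Everything else in your proposal is unnecessary detour. The spatial macro-site construction (spreading a bundle over $\beta$ consecutive physical sites, rescaling distances by $\beta$, redefining a scaled cost function $T_\beta$, and replaying \lref[Lemma]{lem:T_relation} and \lref[Lemma]{lem:Tboundimp}) would work, but it also requires the line to have roughly $\beta\cdot 2^\ell$ sites rather than the $2^\ell+1$ that the lemma hypothesizes --- so as stated it does not fit the assumptions. Once you co-locate the $\beta$ copies, there is no rescaling of distances, no $T_\beta$, and no need to re-derive any bound: the packet-level instance \emph{is} the instance of \lref[Lemma]{lem:main_lemma}, and you simply quote its conclusion. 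The rounding worry you raise is likewise moot, since the inequalities ``$\geq \beta\ell$'' and ``$\leq \beta(1-\eps)\ell$'' translate to ``$\geq \ell$ packets'' and ``$\leq (1-\eps)\ell$ packets'' without any slack to absorb.
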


\begin{proof}
First, we fix the input sequence whose existence is asserted by
\lref[Lemma]{lem:main_lemma}. In this input, we replace each request by a
\emph{packet} consisting of $\beta$ requests at the same site. 
In any optimal solution, without loss of generality, all requests belonging to
such a packet are processed together. Therefore, what matters is how many
packets can be kept in the buffers of both algorithms, i.e., at least $\ell$ in case of the
first algorithm and at most $(1-\eps) \cdot \ell$ in case of the second one. 
Hence, for the assumed buffer capacities the cost separation is 
at least $\ell \cdot \eps^2 \cdot (1+\eps)^{-1} \cdot \log_2^{-1} (1+1/\eps)$,
which is $\Omega(\ell)$ for a~fixed $\eps$.

Second, the result of \lref[Lemma]{lem:main_lemma} is still valid when 
the total number of line sites is not equal but greater than $2^\ell + 1$.
The requests arrive then only at the first $2^\ell + 1$ sites and the remaining ones 
cannot help nor hinder the performance of any algorithm.
\end{proof}

\begin{proof}[Proof of Theorem~\ref{thm:main_theorem}]
Let $m = \lfloor \log_2 (n-1) \rfloor$, i.e., $m$ is the largest integer such that $2^m+1 \leq n$.
We consider two cases, and in each of them we lower-bound the cost 
ratio between $\OPT(k)$ and $\OPT((1-\delta) \cdot k)$.
\begin{enumerate}

\item 
If $k < m$, let $\ell = k$, $\beta = 1$ and $\eps = \delta$.
\lref[Lemma]{lem:upscaled} now implies that 
the cost ratio is $\Omega(\ell) = \Omega(k)$.

\item 
If $k \geq m$, we may assume $k \geq 4/\delta$, as otherwise 
$\Omega(\min\{k, \log n\}) = \Omega(1)$ and the theorem follows trivially.
Let $\ell = \lceil m \cdot \delta / 4 \rceil$, $\beta = \lfloor k / \ell \rfloor$ and $\eps = \delta/2$,
so that $\beta \cdot \ell \leq k$. 
On the other hand, $\ell < k \cdot \delta / 4 + 1 \leq k \cdot \delta / 2$,
hence $\beta \cdot \ell > (k / \ell - 1) \cdot \ell = k - \ell
> k \cdot (1 - \delta/2)$, which finally yields
$(1-\eps) \cdot \beta \cdot \ell > (1 - \delta/2)^2 \cdot k > (1 - \delta) \cdot k$.
Therefore $\OPT(k)$ and $\OPT((1-\delta) \cdot k)$ fulfill the conditions of \lref[Lemma]{lem:upscaled},
which now implies that the cost ratio is $\Omega(\ell) = \Omega(m)$.
\end{enumerate}

In both cases we obtain that the cost ratio is 
$\Omega(\min \{k, m\}) = \Omega(\min \{k, \log n\})$.
\end{proof}

%%%%%%%%%%%%%%%%%%%%%%%%%%%%%%%%%%%%%%%%%%%%%%%%%%%%%%%%%%%%%%%%%%%%%%%
%%%%%%%%%%%%%%%%%%%%%%%%%%%%%%%%%%%%%%%%%%%%%%%%%%%%%%%%%%%%%%%%%%%%%%%
%%%%%%%%%%%%%%%%%%%%%%%%%%%%%%%%%%%%%%%%%%%%%%%%%%%%%%%%%%%%%%%%%%%%%%%

\bibliographystyle{alpha}
\bibliography{references}

\end{document}